\newcommand{\veps}{\varepsilon}
\newcommand{\qq}{\mathfrak{q}}
\newcommand*\pFq[6][8]{%
  \begingroup 
  \pFqmuskip=#1mu\relax
  \mathchardef\normalcomma=\mathcode`,
  \mathcode`\,=\string"8000
  \begingroup\lccode`\~=`\,
  \lowercase{\endgroup\let~}\pFqcomma
  {}_{#2}F_{#3}{\left[\genfrac..{0pt}{}{#4}{#5};#6\right]}%
  \endgroup
}
\newcommand{\pFqcomma}{{\normalcomma}\mskip\pFqmuskip}
\title{Moyal deformations, $W_{1+\infty}$ and celestial holography}
\author[a]{Wei Bu,}
\author[b]{Simon Heuveline}
\author[b]{\& David Skinner}
\affiliation[a]{School of Mathematics \& Maxwell Institute for Mathematical Sciences,\\
University of Edinburgh, EH9 3FD,\\
United Kingdom \vspace{0.1cm}}
\emailAdd{w.bu@sms.ed.ac.uk}
\affiliation[b]{Department of Applied Maths \& Theoretical Physics,\\
Wilberforce Road, Cambridge CB3 0WA,\\
United Kingdom \vspace{0.1cm}}
\emailAdd{sph48@cam.ac.uk}
\emailAdd{d.b.skinner@damtp.cam.ac.uk}
\abstract{We consider the Moyal deformation of self-dual gravity. In the conformal primary basis, holomorphic collinear limits of the amplitudes of this theory show that it enjoys a perturbatively exact symmetry algebra $LW_\wedge$ that generalises $Lw_\wedge$, the loop algebra of the wedge algebra of $w_{1+\infty}$, which appears in self-dual gravity.}
\begin{document} 
\maketitle
\flushbottom

\section{Introduction}
\label{sec:Introduction}

At the classical level, self-dual solutions of the vacuum Einstein equations have long been known to be closely associated to the infinite dimensional Lie algebra $w_{1+\infty}$~\cite{Penrose:1976js,Boyer:1985aj,Park:1989fz}. At its heart, the connection arises on the one hand through the realisation of $w_{1+\infty}$ as the space of diffeomorphisms of the plane that preserve a Poisson bracket and, on the other, the fact that any Ricci-flat self-dual manifold is hyperk{\"a}hler (in the Euclidean case) or pseudo-hyperk{\"a}hler (in (2,2) signature) and so possesses an $S^2$s worth of covariantly constant Poisson brackets. In fact, the presence of a whole $S^2$ family of Poisson brackets enhances $w_{1+\infty}$ to its loop algebra $Lw_{1+\infty}$. All these structures naturally arise from the twistor description of such self-dual vacuum spaces, where a certain holomorphic Poisson bracket forms part of the free data of Penrose's non-linear graviton construction~\cite{Penrose:1976js}.\\

Recently, the role of $w_{1+\infty}$ in perturbative gravitational scattering amplitudes has been explored, see {\it e.g.}~\cite{Strominger:2021lvk,Ball:2021tmb,Himwich:2021dau}.  Specifically, {\it conformally soft gravitons}, obtained by a Mellin transform of the usual plane wave momentum eigenstates, were shown to furnish a natural basis of generators of the loop algebra of the wedge subalgebra of $w_{1+\infty}$, with an important augmentation we discuss below. We call this algebra $Lw_\wedge$. The structure constants of $Lw_\wedge$ are then extracted by considering the behaviour of the corresponding Mellin transformed amplitudes in the holomorphic collinear limit $\la ij\ra\to0$. The same features can be seen in a perturbative expansion of the non-linear graviton, obtained via a twistor sigma model~\cite{Adamo:2021bej,Adamo:2021lrv}.\\

Part of the excitement surrounding these results comes from their implications for celestial holography -- the hope that quantum gravity in asymptotically flat space-times may be described by a dual theory living either at $\scri$ or on its space of generators, known as the celestial sphere (or, in (2,2) signature, the celestial torus)~\cite{Raclariu:2021zjz,McLoughlin:2022ljp,Pasterski:2021raf}. Celestial holography is far less developed than holography for  asymptotically AdS spaces, so generic information about possible celestial dual theories -- such that they possess $Lw_\wedge$ symmetry -- is especially valuable. Furthermore,  if the gravitons are sourced by operators in some celestial dual theory, taking the collinear limit should correspond to bringing these operators close together, so that the structure of this algebra might be expected to be visible in their OPE \cite{Himwich:2021dau,Pate:2019lpp,Adamo:2021zpw,Bu:2021avc}.

Indeed, $w$-algebras were initially developed in the context of 2d conformal theories~\cite{Bakas:1989mz,Bakas:1989xu} that, in addition to the stress tensor, have conserved currents of spin $>2$. In particular, $w_{1+\infty}$ itself is the symmetry algebra of a 2d CFT with an infinite tower of conserved currents with arbitrarily high spin $h\in\mathbb{N}$. More precisely, $w_{1+\infty}$ is the {\it classical} symmetry of these theories, seen by taking Poisson brackets (now on the phase space of the 2d theory) of the conserved currents. This algebra receives quantum corrections. Treating the 2d higher spin theory as a quantum theory, $w_{1+\infty}$ is deformed to a significantly more complicated algebra known as $W_{1+\infty}$~\cite{Bakas:1989mz,Pope:1989sr}. Unlike $w_{1+\infty}$, this algebra admits a central charge. In addition, while in $w_{1+\infty}$ the commutator between operators of spins $h$ and $h'$ is a single operator of spin $h+h'-2$, in $W_{1+\infty}$ a whole tower of lower spin operators appears, arising from multiple contractions in the quantum OPE.\\

These ideas make it natural to hope, as was put forward in~\cite{Strominger:2021lvk}, that 
in a full theory of quantum gravity in asymptotically flat space-times, amplitudes should transform under the loop algebra of $W_{1+\infty}$. Currently, it is not clear whether or not this hope can be borne out. On the one hand, it was shown in~\cite{Ball:2021tmb} that in self-dual gravity, the simpler algebra $w_{1+\infty}$ receives no quantum corrections at any order in perturbation theory. More precisely, in any theory whose action consists of a Lagrange multiplier enforcing some version of the self-dual vacuum Einstein equations, for generic external momenta the only possible amplitudes are a 3-pt  $\overline{\rm MHV}$ tree amplitude, and an $n$-particle all $+$ amplitude that is exact at one loop. In both these amplitudes, \cite{Ball:2021tmb} found that it is $w_{1+\infty}$ rather than $W_{1+\infty}$ that appears. On the other hand, in the context of Yang-Mills, it was shown in~\cite{Costello:2022wso} that going beyond the self-dual sector meant that the known structure of the collinear limit of amplitudes was incompatible with the associativity expected of the OPE in any dual theory, with the obstruction already seen in 1-loop corrections to the collinear splitting function. While this associativity could be restored by adding in additional fields (in~\cite{Costello:2022wso}, a certain type of axion), these additional fields simultaneously cancelled the 1-loop all $+$ amplitude.\\

In this paper, we find a deformed version of self-dual gravity that does exhibit a quantum corrected $W$-algebra. However, in the bulk this deformation arises by replacing the Poisson bracket of self-dual gravity by a Moyal bracket, so is suggestive of non-commutative, rather than quantum, corrections.\\

Our paper is arranged as follows. In section~\ref{sec:algebra} we review the basic properties of $w_{1+\infty}$, $W_{1+\infty}$, their loop algebras and their wedge subalgebras. Our presentation closely follows the excellent review of~\cite{Pope:1991ig}, to which the reader is referred for a more comprehensive treatment. In section~\ref{Action_section}, after briefly reviewing the Chalmers-Siegel form~\cite{Chalmers:1996rq} of the action for self-dual gravity, we present its Moyal deformation, which will be our main object of study. Like self-dual gravity, the Moyal deformed theory is 1-loop exact. We evaluate its $\overline{\rm MHV}$ 3-pt tree amplitude in section~\ref{sec:Moyal-gravity}. In section~\ref{sec:OPEs} we study the collinear splitting function which is also 1-loop exact in the Moyal deformed theory, finding that the form of the splitting function is deformed in a simple way. By taking the Mellin transform of this collinear limit, we find that the amplitude is compatible with a dual 2d theory whose OPEs yield the loop algebra of the wedge subalgebra of a $W$-algebra.  We conclude in section~\ref{sec:Discussion} with a speculative conjecture about the possible form of the all plus one-loop amplitudes in this theory.\\

\emph{Note added:} While this paper was being prepared, \cite{Monteiro:2022lwm} appeared on the arXiv.  This paper has some overlap with the present work.

\section{Review of $w_{1+\infty}$ and its family of $W$-algebras}
\label{sec:algebra}

In this section we briefly review $w_{1+\infty}$ and its relation to Poisson diffeomorphisms of the plane. We then review the family of $W$-algebras which can be viewed as quantum deformations of $w_{1+\infty}$. We refer the reader to ~\cite{Pope:1989sr,Pope:1991ig} for a comprehensive review of these topics and their relations to higher spin symmetries of 2d CFTs. 

\bigskip

The infinite dimensional Lie algebra $w_{1+\infty}$ is spanned by generators $w^p_m$ for $p,m\in\Z$, with Lie bracket 
\begin{equation}\label{w_algebra}
    \left[w^p_m,w^q_n\right] = 2(m(q-1)-n(p-1))\,w^{p+q-2}_{m+n}\,.
\end{equation}
The element $w^1_0$ is central, and its presence distinguishes $w_{1+\infty}$ from $w_\infty$. The factor of 2 on the right of~\eqref{w_algebra} can be changed by a universal rescaling of the generators; we will find it convenient to use the convention given in~\eqref{w_algebra}.

This algebra arises in many different contexts. Of particular relevance to this paper is the fact~\cite{Bakas:1989mz,Bakas:1989xu,Hoppe:1988gk} that it can be represented as the space of diffeomorphisms of the plane that preserve a Poisson bracket. Letting $(u,v)$ be coordinates on this plane, the Poisson bracket is
\begin{equation}
\label{Poisson}
\left\{f,g\right\} = \frac{\partial f}{\partial u}\frac{\partial g}{\partial v} - \frac{\partial f}{\partial v}\frac{\partial g}{\partial u}\,,
\end{equation}
for any pair of (smooth) functions $f,g$. Diffeomorphisms that preserve this Poisson structure are generated by vector fields that are Hamiltonian, so $V = \{h, \ \}$ for some function $h$. The elements $w^p_m$ then form a basis for these Hamiltonians. In particular, one recovers~\eqref{w_algebra} by taking
\begin{equation}
\label{w_generator}
    w^p_m = u^{p+m-1}\,v^{p-m-1}\,,
\end{equation}
and using the Poisson bracket~\eqref{Poisson} as Lie bracket. In fact, in our context, the plane will naturally be the {\it complex} 2-plane $\C^2$, so  $(u,v)\in\C^2$ will be holomorphic coordinates and~\eqref{Poisson} a holomorphic Poisson bracket.

\medskip

The generators $\ell_m \equiv 2w^2_m$ of~\eqref{w_generator} with $p=2$ fixed, form a subalgebra of $w_{1+\infty}$ that we recognise as the Witt algebra
\begin{equation}
\label{Wittalgebra}
\left[\ell_m,\ell_n\right] = (m-n)\,\ell_{m+n}\,,
\end{equation}
It is well known that the Witt algebra is the classical limit of the Virasoro algebra
\begin{equation}
\label{Virasoro}
\left[L_m,L_n\right] = (m-n)\,L_{m+n} + cm(m^2-1)\,\delta_{m+n,0}\,.
\end{equation}
that characterises a 2d CFT of central charge $c$. The Virasoro generators $L_m$ are the Laurent modes of the (holomorphic) stress tensor $T(z)$ of the CFT, with the algebra~\eqref{Virasoro} appearing from the $TT$ OPE. The appearance of the central charge $c$ is a quantum effect in the CFT. From this perspective, it is natural to expect that $w_{1+\infty}$ also admits a quantization. This quantization was discovered~\cite{Fateev:1987zh,FATEEV1987644,Zamolodchikov:1985wn} by studying 2d CFTs with higher spin symmetry, and is known as $W_{1+\infty}$. Their generators $W^p_m$ are the Laurent modes of higher spin conserved currents. There is a large literature on $W_{1+\infty}$ (and other $W_N$-algebras) and their CFT realisations, and we refer the reader to {\it e.g.}~\cite{Pope:1991ig} for a review. 

In fact, $W_{1+\infty}$ is best understood as a member of a 1-parameter family of infinite dimensional Lie algebras that is sometimes called $W(\mu)$. These algebras each have generators $W^p_m$ for $p,m\in\Z$ and are defined by the relations\footnote{Note that we have shifted the upper index $p$ and $q$ of the generators by 2 compared to the labels in~\cite{Pope:1991ig}.} 
\begin{equation}\label{capitalW_algebra}
    \left[W_m^p,W_n^q\right]=\sum_{l\geq 0}{\qq}^{2l} f^{pq}_{2l}(m,n;s)\, W_{m+n}^{p+q-2l-2} + c_p(m)\,\qq^{2(p-2)}\,\delta^{p,q}\,\delta_{m+n,0}\,.
\end{equation}
Here, $\qq$ and $s$ are parameters, with $s$ related to $\mu$ by $\mu=s(s+1)$. The functions $f_{2l}^{pq}(m,n;s) = -f_{2l}^{qp}(n,m;s)$ are structure constants (depending on the parameter $s$) while $c_p(m)$ are central charges. By demanding that the bracket in~\eqref{capitalW_algebra} obeys a Jacobi identity, Pope {\it et al.} showed~\cite{Pope:1989sr} that the structure constants must take the form
\begin{subequations}
\begin{equation}
\label{W_coefficients}
    f^{pq}_{2l}(m,n;s)=\frac{1}{2(2l+1)!}\,\phi^{pq}_{2l}(s)\,N^{pq}_{2l}(m,n)\,,
\end{equation}
where
\begin{equation}
\label{hypergeometric}
    \phi_{2l}^{pq}(s)= \pFq[4]{4}{3}{-\frac{1}{2}-2s,\frac{3}{2}+2s,-l-\frac{1}{2},-l}{\frac{3}{2}-p,\frac{3}{2}-q,p+q-\frac{3}{2}-2l}{\, 1\,}
\end{equation}    
in terms of the generalized hypergeometric function ${}_4F_3$, and where
\begin{equation}\label{Symplecton_coefficient}
  N^{pq}_{2l}(m,n)= \sum_{i= 0}^{2l+1}(-1)^i\binom{2l\!+\!1}{i}[p\!-\!1\!+\!m]_{2l+1-i}\,[p\!-\!1\!-\!m]_{i}\,[q\!-\!1\!-\!n]_{2l+1-i}\,[q\!-1\!+n]_{i}
 \end{equation}
\end{subequations}
in terms of the descending Pochhammer symbol, defined by $[a]_b=a! / (a-b)!$ whenever $a\geq -1$ and $b\geq0$ . Similarly, the central charges are constrained to be
\begin{equation}
 c_p(m) = c\,\frac{2^{2p-7}\,p!\,(p-2)!}{(2p-3)!!\,(2p-1)!!}\, \prod_{k=1-p}^{p-1}(m-k)\,.
\label{W_centre}
\end{equation}
In particular, all central charges are fixed in terms of the Virasoro central charge $c$.

\medskip 

Let us make some remarks. Firstly, the parameter $\qq$ controls the deformation away from $w_{1+\infty}$, in the sense that~\eqref{capitalW_algebra} reduces to~\eqref{w_algebra} when $\qq\to0$. However, if $\qq\neq0$ it can be removed from~\eqref{capitalW_algebra} by rescaling $W^p_m \to \qq^{p-2}\,W^p_m$, 
so that the actual value of $\qq$ has no meaning. Secondly, because the hypergeometric function in~\eqref{hypergeometric} is invariant under $s\rightarrow -s-1$, the algebras are more properly labelled by $\mu=s(s+1)$. In particular, the algebra usually called $W_{1+\infty}$ corresponds to setting $\mu=-\frac{1}{4}$ (and so $s=-\frac{1}{2}$),  while $W(0)$ is usually called $W_\infty$. We note that all $W(\mu)$ algebras contain the Virasoro algebra as the subalgebra generated by $L_m = 2W^2_m$. Finally, notice that we have not imposed an upper bound of the value of $l$ in the summation in~\eqref{capitalW_algebra}. In fact, it turns out that the structure constants $f^{pq}_{2l}(m,n;s)$ are non-zero only when $p+q-2l-2\geq0$, so that for finite $p,q$ this summation  includes only finitely many terms. In particular, this ensures that, just as in $w_{1+\infty}$, the element $W^1_0$ remains central. 

\medskip

Unlike $w_{1+\infty}$ with its relation to Poisson diffeomorphisms of the plane, to the best of our knowledge, there is currently no known geometric realisation of $W(\mu)$ for generic $\mu$ (see {\it e.g}~\cite{Pope:1991ig} for a discussion). In particular, there is no known geometric realisation of $W_{1+\infty}$. However, 
there is a particular member of the $W(\mu)$ family, occurring  when $\mu=-\frac{3}{16}$ (and $c=0$), for which such an interpretation is known. Fixing $\mu=-\frac{3}{16}$ implies either $2s+\frac{1}{2}=0$ or $2s+\frac{3}{2}=0$, so that one of the arguments in the top line of the hypergeometric function in~\eqref{hypergeometric} vanishes. In either of these cases, the hypergeometric function reduces to 1, so that the structure constants in this algebra simplify and the relations~\eqref{capitalW_algebra} become
\begin{equation}
\label{W_algebra}
    \left[W_m^p,W_n^q\right]=\sum_{l\geq 0}\frac{{\qq}^{2l}}{2(2l+1)!} \,N^{pq}_{2l}(m,n)\, W_{m+n}^{p+q-2l-2} 
\end{equation}
when the central charge $c=0$.

The algebra~\eqref{W_algebra} can now be realised geometrically by equipping the $(u,v)$ plane with a Moyal bracket, deforming the earlier Poisson bracket~\cite{Pope:1989sr}. That is, we define the Moyal bracket of a pair of functions $f,g$ by~\cite{moyal_1949}
\begin{subequations} 
\begin{equation}
\label{Moyal_bracket}
    \{f,g\}_{\qq} = \qq^{-1}(f\star g - g\star f)\,,
\end{equation}
where the Moyal star product is given by
\begin{equation}
\label{Moyal_star}
    f\star g = f\, \exp\left[ \qq \left(\overleftarrow{\partial_u}\,\overrightarrow{\partial_v} - \overleftarrow{\partial_v}\,\overrightarrow{\partial_u}\right)\right]\,g\,.
\end{equation}
\end{subequations}
The Moyal bracket is a deformation of the Poisson bracket, in the sense that 
\begin{equation}
    \{f,g\}_{\qq} = \{f,g\} + \mathcal{O}(\qq)\ .
\end{equation} 
In fact, it is the unique deformation constructed purely from the Poisson bracket, such that the deformed bracket still obeys a Jacobi identity~\cite{cmp/1103922592,Fletcher:1990ib}. Just as with $w_{1+\infty}$, the algebra~\eqref{W_algebra} can be realised by acting with the Moyal bracket~\eqref{Moyal_bracket} on the generators $W^p_m = u^{p+m-1}\,v^{p-m-1}$. Deforming the Poisson bracket to the Moyal bracket thus corresponds to deforming $w_{1+\infty}$ to the $W(\mu)$ algebra at $\mu=-\frac{3}{16}$. For this reason, $W(-\frac{3}{16})$ is sometimes called the {\it symplecton algebra}. 

\medskip

Now we come to an important point. As mentioned in the introduction and as we review in section~\ref{sec:Chalmers-Siegel}, the algebra that is relevant to the context of celestial holography is not $w_{1+\infty}$ but rather an algebra we shall denote by $Lw_\wedge$. This is the loop algebra of 
an algebra $w_\wedge$ whose generators $w^p_m$ obey the same relations as those of $w_{1+\infty}$, but with two important differences in the allowed values of the indices. Firstly, the index $m$ is restricted to lie in the wedge region $1-p\leq m \leq p-1$, from which the algebra gets its name. Note that we require $p\geq1$ for this region to be non-empty. Secondly, we now allow $p$ and (simultaneously) $m$ to take \emph{half-integer values as well as integer ones.} The generators of $w_\wedge$ may be represented by  monomials $w^p_m = u^{p+m-1}\,v^{p-m-1}$ so that $w_{\wedge}$ generates Poisson diffeomorphisms that are holomorphic over $\C^2$. Note that for fixed $p$, there are $2p+1$ possible values of $m$, so that there are an odd number of generators $w^p_m$ when $p$ is an integer, but an even number where $p$ is a (odd) half-integer.

The loop algebra $Lw_\wedge$ arises because the generators $w^p_m(z)$ of $w_{\wedge}$ that appear in self-dual gravity depend on an additional parameter $z$. The generators of $Lw_\wedge$ are
obtained by the contour integrals 
\begin{equation}
\label{w_loop_generators}
    w^p_{m,r} =\oint \frac{dz}{2\pi i} \frac{w^p_m(z)}{z^{r}} \qquad\qquad\text{so that}\qquad\qquad 
    w^p_m(z)=\sum_{r\in\mathbb{Z}} w_{m,r}^p z^{r-1}\,.
\end{equation}
It follows that these generators obey relations
\begin{equation}
\label{Lw_algebra}
    \{w^p_{m,r},w^q_{n,s}\} = 2(m(q-1)-n(p-1))\,w^{p+q-2}_{m+n,r+s}\,,
\end{equation}
again with the wedge restrictions $|m|\leq p-1$ and $|n|\leq q-1$.\\

Now let us consider whether this extends to the $W(\mu)$-algebras.  Each of these $W(\mu)$-algebras possesses a subalgebra obtained restricting $|m|\leq p-1$, because the Pochhammer symbols in $N^{pq}_{2l}(m,n)$ ensure that the structure constants vanish if $|m+n|\leq p-q-2l-3$. However, remarkably the only case in which is is possible to augment these algebras by allowing half-integer values of $p$ is the symplecton algebra\footnote{We thank Roland Bittleston for clarifying this point to us.}. This is because, for generic $s$, the hypergeometric function in~\eqref{hypergeometric} diverges if any of the arguments on its bottom line is a negative integer. The exception is for $s(s+1)=-\frac{3}{16}$, {\it i.e.} the symplecton algebra, where the hypergeometric function reduces to 1. Thus, since including all the conformally soft gravitons requires the augmentation to half-integer $p,m$, the only possible extension of $Lw_\wedge$ as a Lie algebra is $LW_\wedge$. This may be viewed as the loop algebra of the wedge subalgebra of the symplecton algebra, with the same augmentation to include half-integer values of $p,m$.

\section{Self-dual gravity and its Moyal deformation}
\label{Action_section}

In this section we review actions for self-dual gravity on space-time. We show how the conformally soft modes of the (positive helicity) graviton correspond to generators $w^p_m(z)$ of $Lw_\wedge$, with $2p\in\mathbb{Z}$. We then consider a Moyal deformation of this theory and compute its 3-pt tree-level $\overline{\rm MHV}$ amplitude.

\subsection{The Chalmers-Siegel action for self-dual gravity}
\label{sec:Chalmers-Siegel}

In the absence of a cosmological constant, self-dual gravity can be described by the Chalmers-Siegel action~\cite{Chalmers:1996rq}
\begin{equation}
\label{Chalmers-Siegel}
    S[\tilde\phi,\phi] = \int \tilde\phi\left(\Box \phi + \frac{\kappa}{2}\left\{\partial^{\dal}\phi,\partial_{\dal}\phi\right\}\right)\,d^4x\,.
\end{equation}
Here $\phi$ and $\tilde\phi$ are scalar fields representing the positive and negative helicity states of the graviton, respectively, while $\kappa = \sqrt{32\pi G_{\rm N}}$ is the coupling. To write the interaction, we have defined $\partial_{\dal}= \alpha^\alpha(\partial/\partial x^{\alpha\dal})$ for some choice of spinor $|\alpha\rangle$, and also introduced 
\begin{equation}
\label{R4Poisson}
    \left\{f,g\right\} = (\partial^{\dal}\!f)\,(\partial_{\dal} g) 
    =\epsilon^{\dal\dot{\beta}}\,(\partial_{\dot{\beta}}f)\,(\partial_{\dal}g)
\end{equation}
as a Poisson bracket on $\R^4$. The presence of this Poisson bracket is the origin of the fact that amplitudes in self-dual gravity possess $w_{1+\infty}$ symmetry.  Notice that $\{\partial^{\dal}\phi,\partial_{\dal}\phi\}= \partial^{\dal}\partial^{\dot{\beta}}\phi\,\partial_{\dal}\partial_{\dot{\beta}}\phi$ so that the interaction involves four derivatives in total. 

This action may be understood as follows (see also {\it e.g.}~\cite{Siegel:1992wd,Adamo:2021bej}. Any self-dual Ricci-flat  $M$ is hyperk{\"a}hler\footnote{Or pseudo-hyperk{\"a}hler in (2,2) signature.} and so possesses an $S^2$ family of complex structures, labelled by the spinor $|\lambda\rangle$ up to scale.  A hyperk{\"a}hler manifold also has an $S^2$'s worth of symplectic structures, which for our 4-manifold $M$ are given up to scale by 
\begin{equation}
    \Sigma(\lambda) = \lambda^\alpha\nabla^{\dal}_{\ \alpha}\lrcorner\,(\lambda^\beta \nabla_{\dot{\beta}\beta}\lrcorner\,({\rm vol}(M)))= \frac{1}{2}e^{\dal\beta}\wedge e_{\dot{\beta}}^{\ \alpha}\,\lambda_\al\lambda_\beta\,.
\end{equation} 
Here $\nabla_{\dal\al}$ is the connection on the tangent bundle, ${\rm vol}(M) = \frac{1}{4!}e^{\dal\al}\wedge e^{\ \beta}_{\dal}\wedge e^{\dot{\beta}}_{\ \al}\wedge e_{\dot{\beta}\beta}$ is the volume form on $M$ and $e^{\dal\al}$ the vierbein 1-forms dual to $\nabla_{\dal\al}$. 

The hyperk{\"a}hler condition is equivalent to the triple $\Sigma^{\al\beta}=\e^{\dal(\al}\wedge e_{\dal}^{\ \beta)}$ of 2-forms being closed. In particular, we can identify an open patch $U\subset M$ with a patch of $\C^2$, by picking a basis $(|\alpha\ra,|\hat\al\ra)$ for our spinors (it will be convenient to choose $\la\al\hat\al\ra=1$) and letting $(u,v) = (x|\hat\alpha\rangle)^{\dal}$ be holomorphic coordinates in the complex structure defined by $|\lambda\rangle=|\hat\alpha\rangle$. In these coordinates, the vierbeins can be chosen to have components
\begin{equation}
\label{vierbeins}
 e^{\dal\al}\hat{\al}_\al = \d x^{\dal\al}\,\hat{\al}_{\al}\qquad\text{and}\qquad
 e^{\dal\al}{\al}_\al = \d x^{\dal\al}\al_{\al} - 
 \kappa\,\p^{\dal}\p_{\dot{\beta}}\phi\ \d x^{\dot\beta \beta}\hat\al_{\beta}
\end{equation}
for some scalar $\phi(x)$. The constant $\kappa$ controls the deformation away from flat space. With these vierbeins, closure of $\Sigma^{\al\beta}\hat{\alpha}_\al\hat{\alpha}_\beta$ and $\Sigma^{\al\beta}\hat{\al}_\al\al_{\beta}$ are automatic, while closure of the remaining $\Sigma(\al)$ requires that $\phi$ obeys
\begin{equation}
\label{2nd_Plebanski}
\Box\phi +\frac{\kappa}{2}\left\{\p^{\dal}\phi,\p_{\dal}\phi\right\}=0\,.
\end{equation}
This is known as the second Plebanski equation, and arises as the field equation by varying $\tilde\phi$ in of~\eqref{Chalmers-Siegel}.
Notice that the Poisson bracket in~\eqref{Chalmers-Siegel} \&~\eqref{2nd_Plebanski} is the inverse of the symplectic form $\Sigma(\hat\alpha)$ that has type $(2,0)$ in our chosen complex structure. 

\medskip

The choice of $|\alpha\rangle$ means the action~\eqref{Chalmers-Siegel} respects only a subgroup\footnote{The subgroup is $SU(2)\times B$ where $B$ is the Borel subgroup of $SU(2)$ represented by unimodular upper triangular matrices; {\it i.e.} the subgroup of $SU(2)$ that preserves the spinor $|\alpha\rangle$ up to scale.} of $SO(4)$. However, provided the external states of momentum\footnote{In Euclidean signature, the linearised on-shell condition $p^2=0$ requires that the external momenta are complex.} $p=|\lambda\ra[\tilde{\lambda}|$ are normalized as 
\begin{equation}
    \label{plane_wave_normalisation}
    \phi_p(x) = \langle\alpha\lambda\rangle^{-4}\,e^{{\rm i}p\cdot x}
     \qquad\qquad
    \tilde\phi_p(x) = \langle \alpha\lambda\rangle^4\,e^{{\rm i}p\cdot x}\,,
\end{equation}
the amplitudes it gives rise to are invariant under the full $SO(4)$, as we would expect from its origin as self-dual gravity. In fact, the only non-vanishing amplitudes of self-dual gravity are the tree-level amplitude with one negative helicity and $n-1$ positive helicity gravitons, and the $n$-particle, all $+$ amplitude at 1-loop. The tree-level amplitude vanishes unless all external particles are (holomorphically) collinear, but the 1-loop amplitude exists for generic $p_i$, subject only to $p_i^2=0$ and $\sum p_i=0$. Furthermore, these tree-level $-+\cdots +$ and 1-loop $+\cdots+$ amplitudes computed from~\eqref{Chalmers-Siegel} agree, at the same order of perturbation theory, with the corresponding amplitudes computed from the full Einstein-Hilbert action. However, in full gravity these amplitudes receive further loop corrections.

\medskip

The generators of the loop algebra $Lw_{\wedge}$ are usually described as coming from scattering conformally soft gravitons, rather than plane waves~\cite{Guevara:2021abz}. For the positive helicity outgoing graviton, these are obtained by parametrizing 
\begin{equation}
    \label{spinor_parametrize}
    \lambda_\alpha= \sqrt{\omega}\,(1,z) = \sqrt{\omega}\,z_\alpha\,,
    \qquad\qquad
    \tilde\lambda_{\dal} = \sqrt{\omega}\,(1,\tilde z) = \sqrt{\omega}\,\tilde{z}_{\dal} 
\end{equation}
and then taking the residue of the Mellin transformed momentum eigenstate
\begin{equation}
\label{Mellin}
G^\Delta_{z,\tilde{z}}(x) = \int_0^\infty \frac{d\omega}{\omega}\,\omega^{\Delta}\,\phi_p(x) = \frac{\im^{\Delta-2}}{\la\alpha z\ra^4}\frac{\Gamma(\Delta-2)}{(x^{\alpha\dal}z_\alpha \tilde{z}_{\dal})^{\Delta-2}}
\end{equation}
at integer values of $\Delta$. The normalisation factors in~\eqref{plane_wave_normalisation}, which ensure that $\phi$ and $\tilde\phi$ represent states of helicity +2 and $-2$ respectively, mean that the residue is non-zero only for $\Delta = k\in\{ 2,1,0,-1,-2,\ldots\}$.  The residues have conformal weights $(\frac{k+2}{2},\frac{k-2}{2})$, and in particular admit a (binomial) mode expansion
\begin{equation}
\label{tilde_z_mode_expansion}
    {\rm Res}_{\Delta=k}\,\left(G^\Delta_{z,\tilde{z}}(x)\right)= \frac{(-\im)^{2-k}}{\la\alpha z\ra^4}\frac{(x^{\alpha\dal}z_\alpha\tilde{z}_{\dal})^{2-k}}{(2-k)!}= \sum_{m=1-p}^{p-1} \frac{\tilde{z}^{p-m-1}\,w^p_{m}(z)}{(p-m-1)!\,(p+m-1)!}
\end{equation}
in $\tilde{z}$. Following~\cite{Adamo:2021lrv}, in the final equality we have relabelled $k= 4-2p$ to agree with the conventions in~\eqref{w_algebra}, and defined
the conformally soft modes
\begin{equation}
\label{conformal_modes}    
w^p_m(z) = \frac{(-1)^{p-1}}{\la\alpha z\ra^{4}}\, (x^{\alpha\dot{0}}z_\alpha)^{p+m-1}(x^{\alpha\dot{1}}z_\alpha)^{p-m-1}\,.
\end{equation}
These $w^p_m(z)$ are the generators of $Lw_\wedge$, with modes $w^p_{m,r}$ coming from further expanding in $z$.  The fact that the residues of $G^\Delta_{z,\tilde{z}}(x)$ involved only positive powers of $x^{\alpha\dal}z_\alpha \tilde{z}_{\dal}$ is the origin of the restriction to the wedge subalgebra. Note again that the indices $p,m$ can each be (simultaneously) either an integer or half-integer. The structure of the algebra itself will come from the interactions between these modes, and can be seen in the corresponding amplitudes. 

\medskip

Although it will not be central to this paper, we also point that, at the classical level, self-dual gravity and its relation to $Lw_\wedge$ has long been known to be closely related to twistor theory. See {\it e.g.}~\cite{Penrose:1976js} for an original reference and \cite{Adamo:2021lrv} for a modern treatment with emphasis on scattering amplitudes. As in self-dual Yang-Mills~\cite{Costello:2021bah,Costello:2022upu,Costello:2022wso}, the situation at the quantum level is more subtle, see~\cite{Bittleston:2022nfr}.

\subsection{Moyal deformed self-dual gravity}
\label{sec:Moyal-gravity}

The origin of $w_{1+\infty}$ in self-dual gravity amplitudes is ultimately the presence of the Poisson bracket~\eqref{R4Poisson} on $\R^4$. The fact that the symplecton algebra arises as the Lie algebra of functions on the plane under the Moyal bracket strongly suggests that, to obtain a theory whose amplitudes respect the deformed loop algebra $LW_\wedge$, we should deform the action by changing~\eqref{R4Poisson} to a Moyal bracket. That is, we consider Moyal deformed self-dual gravity, by which we mean the theory with action
\begin{equation}
    \label{Moyal_action}
    S_{\qq}[\tilde\phi,\phi]= \int \tilde\phi\left(\Box\phi + \frac{\kappa}{2}\left\{\partial^{\dal}\phi,\partial_{\dal}\phi\right\}_{\qq}\right)\,d^4x \,,
\end{equation}
where $\{\ , \ \}_{\qq}$ is the Moyal bracket defined via
$\{f,g\}_{\qq} = \qq^{-1}\left( f\star g - g\star f\right)$ with 
\begin{equation}
    \label{R4_Moyal_star}
    f\star g = f\, \exp\left[ \qq \left(\epsilon^{\dal\dot{\beta}}\,\overleftarrow{\partial_{\dal}}\,\overrightarrow{\partial_{\dot{\beta}}}\right)\right]\,g\,.
\end{equation}
the Moyal star on $\R^4\cong \C^2$, just as in~\eqref{Moyal_bracket}-\eqref{Moyal_star}. We emphasise that this Moyal bracket acts on space-time itself, rather than on phase space as is common in applications to deformation quantization~\cite{moyal_1949,Kontsevich:1997vb,Groenewold1946OnTP,Dito:1990rj,Fairlie:1998rf}. Moyal deformed self-dual gravity has been considered previously in~\cite{Strachan:1992em,Strachan_1995}, where the non-commutative version of the Plebanski equations and associated non-linear graviton construction were considered from the perspective of integrable systems. 
We note that this Moyal star product breaks Lorentz invariance to $SU(2)\times B$ where $B\subset SU(2)$ is a Borel subgroup that fixes the spinor $|\alpha\ra$. To introduce a Moyal bracket in a way compatible with full $SU(2)\times SU(2)$ invariance requires moving to a higher spin theory, see {\it e.g.}~\cite{Krasnov:2021nsq,Tran:2022tft}.

\medskip

Since the kinetic term in~\eqref{Moyal_action} is undeformed, at the linearised level we scatter the same states as in the Chalmers-Siegel theory. In particular, momentum eigenstates are again normalised as in~\eqref{plane_wave_normalisation}, so the Moyal deformed theory possesses the same set of conformally soft gravitons as usual.  This corresponds to the fact that, as we saw in section~\ref{sec:algebra}, $W_\wedge$ has the same set of generators as $w_\wedge$, with only the structure of the algebra itself being deformed. This deformation of course arises from the deformed interaction. While the Moyal bracket is a complicated, non-local operator on space-time, its action on momentum eigenstates is remarkably simple. We have
\begin{equation}
    \label{Moyal-momentum}
    \left\{\phi_{p_1},\phi_{p_2}\right\}_{\qq} = \frac{1}{\qq}\sinh\left(\qq \,\langle \alpha|p_1p_2|\alpha\rangle\right)\,\phi_{p_1}\phi_{p_2}
\end{equation}
for any pair of 4-momenta $p_{1,2}$ that may be off-shell.

In particular, the 3-particle $\overline{\rm MHV}$ tree amplitude that follows from~\eqref{Moyal_action} is given by $M^{0,3}_\qq= \delta^4(\sum_i p_i)\,\cM^{0,3}_\qq$, with
\begin{equation}
    \label{Moyal_MHV_bar}
\begin{aligned}
    \cM^{0,3}_{\qq}(p_1^-,p_2^+,p_3^+) &= \kappa\left(\frac{\langle\alpha 1\rangle}{\langle\alpha 2\rangle\langle\alpha 3\rangle}\right)^4\frac{[23]\langle\alpha 2\rangle\langle\alpha3\rangle}{\qq}\,\sinh\left(\qq\,[23]\langle\alpha2\rangle\langle\alpha3\rangle\right)\\
     &= \kappa\, \frac{[23]^7}{([12][23][31])^2}\,[23]_{\qq}\ .
\end{aligned}
\end{equation}
The second expression here holds on the support of momentum conservation and involves the deformed symplectic product of pairs of dotted spinors, defined as
\begin{equation}
\label{Moyal_deformed_square}
    [ij]_{\qq}\,=\, \frac{\sinh\left(\qq\,[ij]\la\alpha i\ra\la\alpha j\ra\right)}{\qq\,\la\alpha i\ra\la\alpha j\ra} \,.
\end{equation}
Notice that, just like the usual spinor product $[ij]=\epsilon^{\dal\dot\beta}\tilde\lambda_{i\dot\beta}\tilde\lambda_{j\dal}$, this deformed product obeys $[ij]_\qq = -[ji]_{\qq}$, behaves as $[ij]_\qq \mapsto (r_ir_j)^{-1} [ij]_\qq$ under the scaling $(\lambda_i,\tilde\lambda_i)\mapsto (r_i\lambda_i,r_i^{-1}\tilde\lambda_i)$, and is 
invariant under $SL(2)$ transformations acting on dotted spinor indices. However, the fact that $[ij]_\qq$ depends on a choice of undotted spinor $|\alpha\ra$ shows that amplitudes in the Moyal deformed theory are not fully Lorentz invariant. Finally, we notice that $\lim_{\qq\to 0}\, [ij]_\qq = [ij]$, so that the Moyal deformed amplitude~\eqref{Moyal_MHV_bar} reduces to the usual three-point $\overline{\rm MHV}$ tree amplitude as $\qq\to0$.

\section{Celestial OPEs and $LW_\wedge$}
\label{sec:OPEs}

In this section, we will see explicitly that the $Lw_{\wedge}$ symmetry of self-dual gravity gets deformed to $LW_\wedge$ in the Moyal theory. This $LW_\wedge$ symmetry is perturbatively exact.

\subsection{The holomorphic collinear limit of $\cM^{1,n}_\qq$}
\label{sec:Collinear}

In the celestial holography program,  collinear limits of amplitudes are interpreted as providing information about the structure of OPEs in a 2d theory living on the celestial sphere. In particular, in self-dual gravity these limits reveal that any such celestial dual theory must contain operators that generate $Lw_\wedge$. 

While Moyal theory~\eqref{Moyal_action} will also generate 1-loop all plus amplitudes, at present we do know understand their explicit form (though we tentatively present a possible form in section~\ref{sec:Discussion}. Fortunately, in any quantum field theory, the behaviour of amplitudes in the \emph{true} collinear limit $p_i\propto p_j$ is fixed on general grounds, with the $\ell$-loop, $n$ particle amplitude at  factorizing into a sum of $k\leq\ell$-loop, $n-1$ particle amplitudes and an $(\ell-k)$-loop splitting function that describes the how the two collinear particles connect to the remainder of the amplitude~\cite{Bern:1998sv,Bern:1998xc}. In both self-dual gravity and the Moyal theory, there only non-trivial amplitudes (for $n>3$) are the 1-loop all plus $\cM^{1,n}_\qq$, so only the tree-level splitting function is relevant. In particular, we must have
\begin{equation}
    \cM^{1,n}_\qq \xrightarrow{1\parallel 2} \text{Split}_{\qq}(1^+,2^+)\,\cM^{1,n-1}_\qq
\end{equation}
in the true collinear limit, where $\text{Split}_\qq$ is the tree-level splitting function associated to~\eqref{Moyal_action}. Explicitly, this splitting function is
\begin{equation}
     \text{Split}_{\qq}(1^+,2^+)= -\kappa \, \cM^{0,3}_\qq (1^+,2^+,-p^-)\times \frac{1}{p^2}=\frac{-\kappa}{2} \frac{[12]^4}{[2p]^2[p1]^2}\frac{[12]_{\qq}}{\la 12\ra} \,,
\end{equation}
where $p=p_1+p_2$ is the momenta in the propagator and we have used the $\overline{\text{MHV}}$ 3-point amplitude \eqref{Moyal_MHV_bar} and the undeformed propagator $1/p^2$. Importantly, because $\sinh(z)$ that appears in~\eqref{Moyal_deformed_square} is an entire function, $[12]_\qq$ introduces no new singularities.  However, in the true collinear limit, $[12]\to0$ as well as $\langle 12\rangle\to0$, so that $[12]_\qq\to[12]$ and we have the same collinear limit as at $\qq=0$.

For celestial holography, what is actually needed is the \emph{holomorphic} collinear limit, where $\langle 12\rangle\to0$ with $[12]$ unchanged. Fortunately, it was argued in~\cite{Ball:2021tmb} that for two positive helicity gravitons, the structure of the collinear limit is exactly the same as that of the true collinear limit. In the Moyal case, this is still true provided we take the limit in a way such that $\langle 1\alpha$ remains non-zero so as to avoid generating singularities from $[12]_\qq$. Parametrizing the momenta in the holomorphic collinear limit as usual by
\begin{equation}
 |p\rangle = \frac{1}{\sqrt{t}}|1\ra = \frac{1}{\sqrt{1-t}}|2\ra \,,
\end{equation} 
the Moyal deformed amplitude $\cM^{1,n}_{\qq}$ must behave in the holomorphic collinear limit $\langle12\rangle\to0$ as
\begin{equation}
\label{holomorphic_collinear_Mq}
    \cM^{1,n}_\qq(1^+,2^+,\dots,n^+) \xrightarrow{1\parallel 2}\frac{-\kappa/2}{t(1-t)}\, \frac{[12]_{\qq}}{\la 12\ra}\  \cM^{1,n-1}_\qq(p^+,3^+,\dots,n^+)\,.
\end{equation}
The crucial difference compared to self-dual gravity is that the overall factor of $[12]$ is deformed to $[12]_\qq$, arising from the fact that the splitting function involves the Moyal deformed vertex between the two collinear particles.

\subsection{$LW_{\wedge}$ algebra from the holomorphic collinear limit}
\label{sec:collinear}

Celestial amplitudes are defined to be the Mellin transform of massless momentum space amplitudes, whose external null momentum $p_i$ can be parametrized by an energy scale $\omega_i$ for each particle and local coordinates $z_i$, $\tilde{z}_i$ on the celestial sphere
\cite{Pasterski:2016qvg}
\begin{equation}
   p_{\mu}=\omega\, z_\alpha \tilde{z}_{\dot{\alpha}} = \frac{\omega}{\sqrt{2}}\left(1+|z|^2,\,z+\tilde{z},\,-\im (z-\tilde{z}),\,1-|z|^2\right)\,,  
\end{equation}
Then a Mellin transform in the energy scale $\omega_i$ gives the celestial amplitude
\begin{equation}
    \widetilde{\mathcal{M}}^{1,n}_\qq(\Delta_i,z_i,\tilde{z}_i)= \left[\prod_{i=1}^n\int_{0}^\infty \frac{d\omega_i}{\omega_i} \,\omega_i^{\Delta_i}\right]\cM_{\qq}^{1,n}(p_i^+).
\end{equation}
And each individual external particle is taken from momentum eigenstate $\phi$ \eqref{plane_wave_normalisation} to boost eigenstate, which we shall label by $G_{\Delta}(z,\tilde{z})$ in the following discussions. We now would like to examine the holomorphic collinear limit of our $\qq$-deformed 1-loop amplitude \eqref{holomorphic_collinear_Mq} in conformal primary basis. After relabeling energy scales of particle 1 and 2 in the collinear regime as $\omega_1=t\omega_p$ and $ \omega_2=(1-t)\omega_p$, performing the $\omega_p$ integral gives
\begin{equation}
\begin{aligned}
  &\widetilde{\cM}^{1,n}_\qq(\Delta_i,z_i,\tilde{z}_i)\ \overset{z_{12}\rightarrow0}{\longrightarrow} \,  -\frac{\kappa}{2}\, \sum_{l=0}^{\infty} \frac{(\qq\, z_{\alpha 1}z_{\alpha 2})^{2l}}{(2l+1)!}\,   \frac{\tilde{z}_{12}^{2l+1}}{z_{12}}\ \times\\
    &\int_0^1  dt\, t^{\Delta_1+2l-2}(1-t)^{\Delta_2+2l-2}\, 
      \widetilde{\cM}^{1,n-1}_\qq(\Delta_1\!+\!\Delta_2\!+\!4l,z_2,\tilde{z}_2\!+\!t\tilde{z}_{12};\cdots)\ +\  \cO(z_{12}^0)\,,
\end{aligned}
\end{equation}
where $z_{ij}=z_i-z_j \, , \tilde{z}_{ij}= \tilde{z}_i-\tilde{z}_j$ and we have used spinor parametrization \eqref{spinor_parametrize} to capture collinear singularity. The additional sum over $l$ comes from expanding the $\sinh$ in $\qq$.
Taylor expanding the celestial amplitude on the right hand side in $\tilde{z}_{12}$ and performing the $t$-integral leads to
\begin{equation}
\begin{aligned}
    &\widetilde{\cM}^{1,n}_\qq(\Delta_i,z_i,\tilde{z}_i)\ \overset{z_{12}\rightarrow0}{\longrightarrow}-\frac{\kappa}{2z_{12}}\sum_{l=0}^{\infty}\frac{(-1)^l(\qq\, z_{\alpha 1}z_{\alpha 2})^{2l}}{(2l+1)!}\ \times\\
    &\left[\sum_{n=0}^{\infty} \frac{
     \tilde{z}_{12}^{n+2l+1}}{n!}\, B(\Delta_1\!-\!1\!+\!2l\!+\!n,\Delta_2\!-\!1\!+2l)\ \tilde{\partial}^n_2\widetilde{\cM}^{1,n\!-\!1}_\qq(\Delta_1\!+\!\Delta_2\!+\!4l,z_2,\tilde{z}_2;\cdots)\right]\\
     &\qquad +\  \mathcal{O}(z_{12}^0)\,,
\end{aligned}
\end{equation}
where $\tilde\partial_2 = \partial/\partial\tilde{z}_2$. From this we can read off momentum space splitting function in boost eigenstate as two conformal primary gravitons approaching each other on the celestial sphere producing the celestial OPE. Here $B(x,y)=\frac{\Gamma(x+y)}{\Gamma(x)\Gamma(y)}$ is the Euler Beta function. Recalling that  conformally soft gravitons are defined by $H^k(z,\tilde{z})=\text{Res}_{\Delta=k}\, G_{\Delta}(z,\tilde{z})$ for $k=2,1,0,-1,\dots$, we obtain the celestial OPE of $H^k$ and $H^j$ as:
\begin{equation}
\label{eq:OPEH}
\begin{aligned}
    &H^k(z_1,\tilde{z}_1)\,H^j(z_2,\tilde{z}_2)\\ 
         &\sim-\frac{\kappa}{2z_{12}}\sum_{l=0}^{\infty}\sum_{n=0}^{1-k-2l}\frac{(-1)^l}{(2l+1)!} \binom{2-k-j-4l-n}{1-2l-j}(\qq z_{\alpha 1}z_{\alpha 2})^{2l}  \frac{\tilde{z}_{12}^{n+2l+1}}{n!} \tilde{\partial}^n H^{k+j+4l}(z_2,\tilde{z}_2).
\end{aligned}
\end{equation}
As a consistency check, we see that taking the $\qq\rightarrow 0$ limit reduces~\eqref{eq:OPEH} to the OPE found in~\cite{Guevara:2021abz, Ball:2021tmb}.\\

In the case of self-dual gravity, one extracts the OPE between conformally soft modes 
\begin{equation}
    H^k_n(z)=\oint_{|\tilde{z}|<\epsilon} \frac{d\tilde{z}}{2\pi i}\,\tilde{z}^{n+\frac{k-4}{2}}\,H^k(z,\tilde{z})
\end{equation} 
as a contour integral of the $\qq=0$ limit of~\eqref{eq:OPEH}. To obtain the $Lw_{\wedge}$-algebra, one must then use Vandermonde identity $\sum_{t=0}^r \binom{n}{t}\binom{m}{r-t}=\binom{m+n}{r}\,$ to perform the residual sum \cite{Ball:2021tmb}. We can perform the same contour integrals and residual sum in the $\qq$-deformed theory where we need the following generalised Vandermonde identity
\begin{equation} 
      \sum_{t=0}^r\  [t]_l\,\binom{n}{t}\binom{m}{r-t}=\frac{[r]_l\,[m]_l}{[m+n]_l}\binom{m+n}{r}\,,
\end{equation}
which will be proved in section~\ref{sec:Pochhammer}.

In order to perform the contour integrals over equation \eqref{eq:OPEH} in $\tilde{z}_1$ and $\tilde{z}_2$, we use
\begin{equation}
\label{eq:contouridentity}
    \oint_{|\tilde{z}_1|<\epsilon} d\tilde{z}_1 \frac{\tilde{z}_1^{m+\frac{k-4}{2}}}{2\pi i} \tilde{z}_{12}^{n+2l+1}=\frac{(n+2l+1)!}{(\frac{2-k}{2}-m)!\,(n+2l+m+\frac{k}{2})!}(-\tilde{z}_2)^{m+n+2l+\frac{k}{2}}
 \end{equation} 
if $-\frac{k}{2}-m\leq n+2l$, and zero otherwise.
This shows that
\begin{align}
    &H^k_m(z_1)\,H^j_{n}(z_2) \nonumber\\ 
    \quad&\sim  -\frac{\kappa}{2z_{12}} \sum_{l=0}^{C(m,n,k,j)} \sum_{t=-k/2-m-2l}^{1-k-2l} \left\{\frac{(-1)^{l+m+t+2l+k/2}}{(2l+1)!\,t!} (qz_{\alpha 1}z_{\alpha 2})^{2l}\binom{2-k-j-4l-t}{1-2l-k-t}\right. \nonumber \\ 
    \quad&\qquad\qquad\times\  \left.\frac{(t+2l+1)!}{(\frac{2-k}{2}-m)!\,(t+2l+m+\frac{k}{2})!}  \left[\frac{2-k-j}{2}-2l-m-n\right]_t H^{k+j+4l}_{m+n}(z_2)\right\}
\end{align}
where $C(m,n,k,j)=\lfloor\tfrac{1}{4}(|m+n|+\frac{2-k-j}{2}) \rfloor$ is the upper bound for $l$, for which the inequality in the second case of equation \eqref{eq:contouridentity} holds. Note that when $l=0$, this reduces to the undeformed self-dual gravity calculation where the sum over $t$ can be performed straight away. It is more subtle here and requires the use of two identities in appendix~\ref{sec:Pochhammer}. More precisely, after shifting $t\rightarrow t-k/2-m-2l$ and rearranging the terms to collect terms that involves $t$, the right hand side becomes
\begin{align}
\label{eq:lasttsum}
 &H^k_m(z_1)H^j_{n}(z_2)\sim -\frac{\kappa}{2z_{12}}\sum_{l=0}^{C(m,n,k,j)}\frac{(-1)^{2l+1-k/2+m}}{(2l+1)!}\\ \nonumber
 &\frac{((2-k)/2-m+(2-j)/2-n-2l-1)!}{(\frac{2-k}{2}-m)!\,(\frac{2-j}{2}-n)!} (qz_{\alpha 1}z_{\alpha 2})^{2l} F(k,j,l,m,n)\,H^{k+j+4l}_{m+n}(z_2)\,.
\end{align}
where $F(k,j,l,m,n)=\sum_{t=0}^{(2-k)/2+m} 
[t-k/2-m+1]_{2l+1}\binom{j+2l-2}{1-k/2+m-t}\binom{\frac{2-j}{2}-n}{t} $ is the sum over $t$, the first term $[t-k/2-m+1]_{2l+1}$ can be expanded using lemma \ref{lem:Pochhammer1}, which gives 
\begin{equation}
    \left[t+\frac{2-k}{2}-m\right]_{2l+1}=\sum_{i=0}^{2l+1}\binom{2l+1}{i}[t]_{k}\left[\frac{2-k}{2}-m\right]_{2l+1-i}\,,
\end{equation}
Then, the generalized Vandermonde-identity from lemma \ref{lem:Pochhammer2} can be used to perform the sum over $t$ which leads to 
\begin{align}
F(k,j,l,m,n)& =(-1)^{2l+(2-k)/2+m} \\
   & \sum_{i=0}^{2l+1}\binom{2l+1}{i}\left[\frac{2-k}{2}-m\right]_{2l+1-i}\frac{[\frac{2-k}{2}+m]_i[\frac{2-j}{2}-n]_i}{[\frac{j-2}{2}-n+2l]_i}\binom{\frac{j-2}{2}-n+2l}{\frac{2-k}{2}+m}\,.\nonumber
\end{align}   
Absorbing the $(-1)^{2l+(2-k)/2+m}$ into the remaining binomial coefficients using $(-1)^k \binom{n}{k}=\binom{k-n-1}{k}$ leaves us with
\begin{align}
    F(k,&j,l,m,n)=\frac{(\frac{2-k}{2}+m+\frac{2-j}{2}+n-2l-1)!}{(\frac{2-k}{2}+m)!\,(\frac{2-j}{2}+n)!}\\\nonumber
    & \sum_{i=0}^{2l+1}\binom{2l+1}{i}\left[\frac{2-k}{2}-m\right]_{2l+1-i}\left[\frac{2-k}{2}+m\right]_i\left[\frac{2-j}{2}-n\right]_i\left[\frac{2-j}{2}+n\right]_{2l+1-i}.
\end{align}

\noindent After substituting this back into equation \eqref{eq:lasttsum} one finds the desired result that
the OPE of soft graviton modes in the $\qq$-deformed theory is
\begin{align}
    &H^k_m(z_1)\,H^j_{n}(z_2)\sim \\
    \nonumber&\frac{-\kappa}{2z_{12}}\sum_{l=0}^{C(k,j,m,n)}\frac{(-1)^l(\qq\, z_{\alpha 1}z_{\alpha 2})^{2l}}{(2l+1)!} \bigg(\frac{(\frac{2-k}{2}\!-\!m\!+\!\frac{2-j}{2}\!-\!n\!-\!2l\!-\!1)!\,(\frac{2-k}{2}\!+\!m\!+\!\frac{2-j}{2}\!+\!n\!-\!2l\!-\!1)!}{(\frac{2-k}{2}-m)!(\frac{2-j}{2}-n)!(\frac{2-k}{2}+m)!(\frac{2-j}{2}+n)!}\bigg)\times\\
    \nonumber&\ \sum_{i=0}^{2l+1}(-1)^i\binom{2l\!+\!1}{i}\left[\frac{2\!-\!k}{2}\!-\!m\right]_{2l+1-i}\left[\frac{2\!-\!k}{2}\!+\!m\right]_i\left[\frac{2\!-\!j}{2}\!-\!n\right]_i\left[\frac{2\!-\!j}{2}\!+\!n\right]_{2l+1-i}H^{k+j+4l}_{m+n}(z_2)\,,
\end{align}
where the upper limit of the sum 
\[
C(k,j,m,n)=\left\lfloor\frac{|m+n|}{4}+\frac{2-k-j}{8} \right\rfloor\, .
\]
Using the relabelling $k=4-2p$, we rewrite the modes to absorb some of the factorials in the coefficient 
\begin{equation}
W^p_m(z)=\frac{1}{\kappa} (p-m-1)!\,(p+m-1)!\,H^{4-2p}_m(z)
\end{equation}
as discussed in section \ref{sec:Chalmers-Siegel} to obtain
\begin{equation}
\begin{aligned}\label{OPE_result}
    & W^p_m(z_1)\,W^q_n(z_2)\sim \frac{1}{2z_{12}} \sum_{l=0}^{C(p,q,m,n)} \frac{(-1)^l (\qq\,z_{\alpha 1}z_{\alpha 2})^{2l}}{(2l+1)!}\sum_{i=0}^{2l+1}(-1)^i\binom{2l+1}{i}\\
    &[p-1+m]_{2l+1-i}[p-1-m]_i[q-1+n]_i[q-1-n]_{2l+1-i}W_{m+n}^{p+q-2-2l}(z_2)
\end{aligned}
\end{equation}
with $C(p,q,m,n)=\lfloor\tfrac{1}{4}(|m+n|+p+q-3) \rfloor$, which enforces the wedge condition.
Comparing \eqref{OPE_result} and \eqref{W_algebra}, we see that we have arrived at $W_\wedge$ on the nose. Remember the fact that although the sum over $l$ does not have a upper limit in \eqref{W_algebra}, an automatic cut off is in place for the wedge condition.

\section{Discussion}
\label{sec:Discussion}

In this paper, we have studied the Moyal deformed self-dual gravity theory~\eqref{Moyal_action}. We have examined the form of the collinear splitting function of their $n$-pt all $+$ helicity amplitudes and found that the only difference with the usual amplitude involves the splitting function deformed by $[ij]\to[ij]_\qq$. After Mellin transforming this collinear limit to extract the contribution of conformally soft gravitons, we found that the deformed splitting function gives rise to $LW_\wedge$. This is the loop algebra of $W_\wedge$, which is itself an augmentation of the wedge subalgebra of symplecton algebra to allow half-integer values for $p,m$. We note that the symplecton algebra is the unique $W(\mu)$-algebra which allows this augmentation. Thus $W_\wedge$ is the unique deformation of $w_\wedge$ as a Lie algebra. Just like $Lw_\wedge$ in self-dual gravity, the $LW_\wedge$ algebra is a perturbatively exact symmetry of the Moyal deformed self-dual gravity. Thus the results of this paper suggest that lifting celestial holography from $w_{1+\infty}$ to its deformation is more likely related to turning on non-commutativity in the bulk, rather than quantum corrections. 

\medskip

There are many possible avenues worthy of further explanation. Firstly, it would be good to know the form of the all plus 1-loop amplitudes in the Moyal deformed theory. Any Feynman diagram contributing to these amplitudes would involve exactly $n$ Moyal vertices. One possible conjecture for the amplitude that seems to naturally generalize the all plus amplitudes in self-dual gravity is
\begin{equation}\label{Moyal_deformed_Bern}
    \cM_\qq^{1,n}(p_i^+) = \sum_{1\leq a<b\leq n}\sum_{M} \ h_\qq(a,M,b)\,h_\qq(b,\overline{M},a)\,(\tr_{\qq}(aMb\overline{M}))^2\,\tr(aMb\overline{M})\,,
\end{equation}
where $h_\qq(a,M,b)$ is the $\qq$-deformed half-soft function
\begin{equation}\label{Moyal_deformed_h}
    h_\qq(a,M,b) = {\det}'(H^M_\qq) = \frac{\det({}^r_c H^M_\qq)}{\la ar\ra\la br\ra\,\la ac\ra\la bc\ra} 
\end{equation} 
where the $\qq$-deformed Hodges type matrix $H_\qq^M$ is defined  as 
\begin{equation}
\label{modified_Hodges}
(H_\qq^M)_{ij}\ =\ \begin{cases}
\displaystyle{\ \frac{[ij]_\qq}{\la ij\ra}}
\qquad &\text{for $i\neq j$,}\\ 
 \displaystyle{\ -\sum_{k\in M\setminus\{i\}} \frac{[ik]_\qq}{\la ik\ra}\frac{\la ak\ra\la bk\ra}{\la ai\ra\la bi\ra}}
 \qquad\qquad&\text{when $i=j$}
\end{cases}
\end{equation}
using the $\qq$-bracket defined in~\eqref{Moyal_deformed_square}. 
Similarly, $\tr_{\qq}(aMb\overline{M})$ is the $\qq$-deformed trace function
\begin{equation}
    \tr_\qq(aMb\overline{M}) = \la a|K_{M}|b]_{\qq}\la b|K_{\overline{M}}|a]_{\qq}+\la b|K_{M}|a]_\qq \la a|K_{\overline{M}}|b]_\qq\,.
\end{equation}
The fact that the half-soft function of self-dual gravity can be written as  this Hodges matrix at $\qq=0$ was found in~\cite{Feng:2012sy}, and it may be possible to find a recursive proof of~\eqref{Moyal_deformed_Bern} using the methods of that paper.

\medskip

While the deformed theory~\eqref{Moyal_action} involves no new fields beyond the usual graviton (at least at the level of the effective field theory we have considered),  alternative approach to constructing gravitational theories possessing $LW_\wedge$ symmetry have been pursued by~\cite{Mago:2021wje,Ren:2022sws}, who consider certain non-minimally scalars in addition to gravity in the bulk, and by~\cite{Krasnov:2021nsq,Tran:2022tft} who consider a higher spin theory containing particles of all integer spin. It would be interesting to understand the relations between these theories and ours.

\medskip

It is also possible to couple self-dual Yang-Mills to Moyal deformed self-dual gravity. One way to achieve this is to couple the gravitational action~\eqref{Moyal_action} to the action 
\begin{equation}
    \label{Moyal-sdg-sdYM}
S[\phi;\bar\chi,\chi] = \int \tr\left(\tilde\chi\Box\chi  + \kappa\,\tilde\chi \left\{\p^{\dal}\phi,\p_{\dal}\chi \right\}_\qq + \frac{g}{2}\tilde\chi\left[\p^{\dal}\chi,
\p_{\dal}\chi\right]_\star\right)  \, \d^4x\,,
\end{equation}
where $\chi$ and $\tilde\chi$ represent the positive and negative helicity modes of the gluon. The Moyal bracket appears in the coupling $\tr(\tilde\chi \{\p^{\dal}\phi,\p_{\dal}\chi\}_\qq) = \tr(\tilde\chi\, \p^{\dal}\p^{\dot\beta}\phi\,\p_{\dal}\p_{\dot\beta}\chi) + \cO(\qq^2)$ between the positive helicity graviton $\phi$ and the gluons, while it also seems natural to deform the self-dual Yang-Mills interaction $\tr(\tilde\chi\,[\p^{\dal}\chi,\p_{\dal}\chi]) = f^a_{bc}\,\tilde\chi_a\,\p^{\dal}\chi^b\,\p_{\dal}\chi^c$ to the non-commutative  $f^a_{bc}\,\tilde\chi_a(\p^{\dal}\chi^b\star\p_{\dal}\chi^c)=f^a_{bc}\,\tilde\chi_a\{\p^{\dal}\chi^b\star\p_{\dal}\chi^c\}_{\qq}$ constructed using the Moyal star. We expect that in this coupled  theory, 
the gauge and mixed gauge-gravity holographic symmetry algebras considered in \cite{Guevara:2021abz} also get deformed. In particular, we would presume that the mixed gluon-graviton amplitudes in this theory show that the conformally soft gluon generators transform as an $LW_\wedge$ module.

\medskip

Finally,~\cite{Ooguri:1991fp} proposed that the string field theory of the $\cN=2$ string is governed by an action that can be written as
\begin{equation}
    \label{Ooguri-Vafa_action}
     \frac{1}{\kappa^6}\int \frac{1}{2}\partial^{\dot{\alpha}}\varphi\,\bar{\partial}_{\dot{\alpha}}\varphi + \frac{1}{3!} \,\varphi\, \{\bar{\p}^{\dal}\varphi,\bar{\p}_{\dal}\varphi\}\ \d^4x\,,
\end{equation} 
where $\bar{\p}_{\dal}= \hat{\alpha}^\alpha\p_{\dal\al}$. Unlike the Chalmers-Siegel action~\eqref{Chalmers-Siegel}, this action contains only a single, real  scalar $\varphi$ of mass dimension $-2$,
 interpreted as a deformation of the K{\"a}hler potential $K= \tilde{z}^{\dal} z_{\dal} + \varphi$ around flat $\mathbb{R}^{2,2}$. The equation of motion is the first Plebanski equation, ensuring that the metric associated to the pseudo-K{\"a}hler metric is Ricci flat. One can again deform this theory by replacing the Poisson bracket by a Moyal bracket and it would be interesting to understand how this arises from the worldsheet perspective. The most obvious possibility is to turn on a background $B$-field~\cite{Lechtenfeld:2000nm}. However, it is curious to note that $\qq$-expansion of the resulting 3-pt tree amplitude has the same structure as corrections to the 3-pt amplitude coming from higher genus\footnote{Since~\eqref{Ooguri-Vafa_action} preserves only a U(1,1) subgroup of SO(2,2) and, unlike~\eqref{Chalmers-Siegel}, full Lorentz invariance cannot be restored by any choice of normalization for $\varphi$. Thus one cannot argue that the 3-pt amplitudes in this theory are completely fixed by Lorentz invariance and little group scaling.} worldsheets found by~\cite{Bonini:1990bf,Lechtenfeld:1999gd, Marcus:1992wi, Junemann:1999xj}.

\acknowledgments
We are grateful for Tim Adamo, Nathan Berkovits, Prahar Mitra, Atul Sharma, Peter Wildemann and especially Roland Bittleston for useful discussions. WB is supported by the Royal Society studentship. SH is partly supported by St. John's College, Cambridge. The work of SH \& DS has been partly funded by STFC HEP Theory Consolidated grant ST/T000694/1.

\appendix
\section{Two identities involving Pochhammer symbols} 
\label{sec:Pochhammer}
Throughout the paper, we use a notation in which the falling Pochhammer symbols are given by $[x]_{n}=\Gamma(x+1)/\Gamma(x-n)$.
\begin{lemma}
\label{lem:Pochhammer1}
Let $x,y \in \mathbb{C}\setminus \mathbb{Z}_{<0}$ and $n\in \mathbb{N}$, then
\begin{align}
[x+y]_{n}=\sum_{k=0}^n\binom{n}{k}[x]_k[y]_{n-k}\,,
\end{align}
\end{lemma}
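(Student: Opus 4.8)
The plan is to prove this binomial-type identity for falling Pochhammer symbols by the standard generating-function argument, using the fact that $[x]_k$ is (up to normalization) the coefficient extraction associated to $(1+w)^x$. Concretely, recall the Newton binomial series $(1+w)^x = \sum_{k\geq 0} \binom{x}{k} w^k$, valid as a formal/analytic identity for $x\in\mathbb{C}$ near $w=0$, together with the translation $[x]_k = k!\binom{x}{k} = \Gamma(x+1)/\Gamma(x-k)$. Thus $[x]_k/k!$ is exactly the coefficient of $w^k$ in $(1+w)^x$. The identity to be proved, after dividing both sides by $n!$, reads
\begin{equation}
\frac{[x+y]_n}{n!} = \sum_{k=0}^n \frac{[x]_k}{k!}\,\frac{[y]_{n-k}}{(n-k)!}\,,
\end{equation}
which is precisely the statement that the degree-$n$ Taylor coefficient of $(1+w)^{x+y}$ equals the $n$-th coefficient in the Cauchy product of the Taylor series of $(1+w)^x$ and $(1+w)^y$. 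Since $(1+w)^{x+y} = (1+w)^x (1+w)^y$ as analytic functions in a neighbourhood of $w=0$, comparing coefficients of $w^n$ gives the claim.

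The key steps, in order, would be: first, fix the convention $[x]_k = \Gamma(x+1)/\Gamma(x-k)$ and note that for $x\in\mathbb{C}\setminus\mathbb{Z}_{<0}$ and $k\in\mathbb{N}$ this equals $x(x-1)\cdots(x-k+1)$, so $[x]_k/k! = \binom{x}{k}$ is well defined; second, invoke the binomial series expansions of $(1+w)^x$, $(1+w)^y$, and $(1+w)^{x+y}$, all convergent for $|w|<1$; third, multiply the first two series and collect the coefficient of $w^n$ via the Cauchy product formula; fourth, equate with the coefficient of $w^n$ in the expansion of $(1+w)^{x+y}$ and multiply through by $n!$ to recover the stated form. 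Alternatively, one could give a purely algebraic proof treating $x,y$ as formal indeterminates: both sides are polynomials in $x$ and $y$ of total degree $n$, so it suffices to verify the identity for all $x,y\in\mathbb{Z}_{\geq n}$, where $[x]_k = x!/(x-k)!$ counts ordered $k$-tuples from an $x$-set and the identity becomes the Vandermonde-type statement that choosing an ordered $n$-tuple from an $(x+y)$-set (with the first $x$ elements "type $x$" and the rest "type $y$") can be stratified by the subset of positions occupied by type-$x$ elements — this reproduces $\binom{n}{k}[x]_k[y]_{n-k}$ after summing over $k$. Polynomial identity then extends the result to all complex $x,y$ away from the poles.

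I do not expect any genuine obstacle here; the only mild care needed is bookkeeping of the domain of validity (avoiding the poles of $\Gamma$ at non-positive integers, which is why the hypothesis excludes $\mathbb{Z}_{<0}$) and making sure the generating-function manipulation is justified, which it is on the common disc $|w|<1$. The combinatorial/polynomial route is perhaps the cleanest to write, since it sidesteps any analytic convergence discussion: prove it for large positive integers by a counting argument, then conclude by polynomiality. Either way the proof is short.
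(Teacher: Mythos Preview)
Your proposal is correct. Both the generating-function argument and the combinatorial/polynomial-identity argument you sketch establish the identity cleanly, and your bookkeeping of the domain (the exclusion of $\mathbb{Z}_{<0}$ so that $\Gamma(x+1)/\Gamma(x-k)$ agrees with the finite product $x(x-1)\cdots(x-k+1)$) is appropriate.

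The paper, however, takes a different and more bare-hands route: a direct induction on $n$. The base case $n=0$ is trivial, and the inductive step writes $[x+y]_{n+1}=(x+y-n)[x+y]_n$, splits the scalar factor as $(x-k)+(y-n+k)$ against the inductive hypothesis, and then shifts indices and uses Pascal's rule to recombine the two resulting sums. Compared with your approach, the paper's argument is entirely elementary and self-contained---no appeal to binomial series, convergence, or polynomial extension---at the price of a slightly more mechanical manipulation. Your generating-function route is more conceptual (it exhibits the identity as the Cauchy-product rule for $(1+w)^x(1+w)^y$) and generalises more readily; the paper's induction is shorter on paper and avoids any analytic overhead.
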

\begin{proof}
The base case $n=0$ is trivially true and the inductive step is similar to the proof of the binomial expansion
\begin{equation}
\begin{aligned}
 [x+y]_{n+1}&= (x+y-n)[x+y]_{n}\\
 &= (x-k)\sum_{k=0}^n\binom{n}{k}[x]_k[y]_{n-k}+ (y-n+k)\sum_{k=0}^n\binom{n}{k}[x]_k[y]_{n-k}\\
 &= \sum_{k=0}^n\binom{n}{k}[x]_{k+1}[y]_{n-k}+\sum_{k=0}^n\binom{n}{k}[x]_k[y]_{n+1-k}\\
 &= \sum_{k=0}^{n+1} \binom{n+1}{k} [x]_k[y]_{n+1-k}\,.
\end{aligned}
\end{equation}

\end{proof}
\begin{lemma}[Generalized Chu-Vandermonde identity]
\label{lem:Pochhammer2}
Let $m,n\in \mathbb{C}\setminus \mathbb{Z}_{<0}$ and $r,l\in \mathbb{N}$, then
\begin{align}
      \sum_{t=0}^r [t]_l\binom{n}{t}\binom{m}{r-t}=\frac{[r]_l[m]_l}{[m+n]_l}\binom{m+n}{r}\,,
\end{align}
\end{lemma}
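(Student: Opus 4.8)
The plan is to reduce the statement to the ordinary Chu–Vandermonde identity by peeling off the falling factorial $[t]_l$ against the binomial $\binom{n}{t}$. The key algebraic observation I would use is the absorption identity
\[
[t]_l\binom{n}{t} = \frac{n!}{(t-l)!\,(n-t)!} = [n]_l\binom{n-l}{t-l}\,,
\]
valid for $t\geq l$ (and the left side vanishes for $t<l$ since $[t]_l=0$ there when $t<l$, using $[x]_n=\Gamma(x+1)/\Gamma(x-n)$ with integer $t$). This rewrites the sum as $[n]_l\sum_{t=l}^{r}\binom{n-l}{t-l}\binom{m}{r-t}$, and after shifting $t\to t+l$ the inner sum is $\sum_{t=0}^{r-l}\binom{n-l}{t}\binom{m}{r-l-t}$, which by the ordinary Chu–Vandermonde identity (already invoked in the main text as $\sum_t \binom{n}{t}\binom{m}{r-t}=\binom{m+n}{r}$) equals $\binom{m+n-l}{r-l}$.

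It then remains to check the elementary identity
\[
[n]_l\binom{m+n-l}{r-l} = \frac{[r]_l\,[m]_l}{[m+n]_l}\binom{m+n}{r}\,.
\]
I would verify this by writing everything in terms of Gamma functions (or factorials, since both sides are rational in $m,n$ and it suffices to check for generic complex values, then extend by the identity theorem): the left side is $\frac{n!}{(n-l)!}\cdot\frac{(m+n-l)!}{(r-l)!\,(m+n-r)!}$, while on the right $\binom{m+n}{r}=\frac{(m+n)!}{r!\,(m+n-r)!}$, $[r]_l=\frac{r!}{(r-l)!}$, $[m]_l=\frac{m!}{(m-l)!}$ and $[m+n]_l=\frac{(m+n)!}{(m+n-l)!}$; multiplying these out, the $(m+n-r)!$ and $(r-l)!$ match immediately and the rest collapses to the claimed equality after cancelling $(m+n)!$ and regrouping $m!,n!,(m-l)!,(n-l)!,(m+n-l)!$. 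Since this is a rational-function identity it holds for all $m,n\in\mathbb{C}\setminus\mathbb{Z}_{<0}$.

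The main obstacle is not the algebra but bookkeeping the ranges of summation and the vanishing conventions: one must be careful that $[t]_l\binom{n}{t}$ genuinely vanishes for $0\le t<l$ (so that extending or truncating the sum to start at $t=l$ is harmless), and that the case $r<l$ is handled — there both sides are zero, the left because every term has $[t]_l=0$ for $t\le r<l$, the right because $[r]_l=0$. I would state these edge cases explicitly at the start. An alternative, slightly slicker route would be a generating-function argument: apply $\frac{1}{l!}(\text{d}/\text{d}x)^l$ evaluated at $x=1$ to the identity $(1+x)^n(1+x)^m=(1+x)^{m+n}$ after multiplying the $n$-factor by $x^{l}$ appropriately, but the absorption-identity approach above is more transparent and avoids differentiating, so I would present that as the main proof.
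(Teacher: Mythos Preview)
Your overall strategy is sound and is genuinely different from the paper's proof. The paper uses the generating-function route you mention at the end as an alternative: it forms $\sum_{r}\big(\sum_{t}[t]_l\binom{n}{t}\binom{m}{r-t}\big)x^{r-l}$, recognises this as $\big((\tfrac{d}{dx})^l(1+x)^n\big)(1+x)^m=[n]_l(1+x)^{m+n-l}$, re-expands, and reads off the coefficient. Your absorption-identity approach ($[t]_l\binom{n}{t}=[n]_l\binom{n-l}{t-l}$, shift, then ordinary Vandermonde) is cleaner in that it avoids generating functions altogether and makes the reduction to classical Vandermonde completely explicit; the paper's method has the minor advantage of handling all $r$ simultaneously.

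However, your final verification step contains a genuine error. After absorption and Vandermonde you correctly arrive at $[n]_l\binom{m+n-l}{r-l}$. But the ``elementary identity'' you then claim,
\[
[n]_l\binom{m+n-l}{r-l} \;=\; \frac{[r]_l\,[m]_l}{[m+n]_l}\binom{m+n}{r}\,,
\]
is \emph{false}: writing both sides out, the left has a factor $n!/(n-l)!$ while the right has $m!/(m-l)!$, and these do not cancel against anything. What your computation actually proves is
\[
\sum_{t=0}^r [t]_l\binom{n}{t}\binom{m}{r-t}=\frac{[r]_l\,[n]_l}{[m+n]_l}\binom{m+n}{r}\,,
\]
with $[n]_l$ rather than $[m]_l$ on the right. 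This is in fact what the paper's own generating-function proof derives as well, so the discrepancy is a typo in the lemma statement rather than a flaw in your method. You should flag this and state the corrected right-hand side; with that fix your proof is complete and arguably more transparent than the paper's.
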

\begin{proof}
For fixed $l$, we shall start with the L.H.S. and show that the generating function in $r$ agree with the R.H.S. 
\begin{equation}
\begin{aligned}
    \sum_{r=0}^{m+n} \Bigg(\sum_{t=0}^r [t]_l\binom{n}{t}\binom{m}{r-t} \Bigg)x^{r-l}&= \left(  \sum_{r_1=0}^{n}[r_1]_l\binom{n}{r_1}x^{r_1-l}\right) \left( \sum_{r_2=0}^m \binom{m}{r_2}x^{r_2}\right)\\ 
    &=\left(\big(\frac{d}{dx}\big)^l (1+x)^n\right)(1+x)^m\\ &=[n]_l (1+x)^{n+m-l}\\
    &=[n]_l\sum_{r=0}^{m+n-l}\binom{m+n-l}{r}x^r\\
     &=[n]_l\sum_{r=l}^{m+n}\binom{m+n-l}{r-l}x^{r-l}\\
     &=\sum_{r=0}^{m+n}\left(\frac{[n]_l\,[r]_l}{[m+n]_l}\binom{m+n}{r}\right)x^{r-l},
\end{aligned}
\end{equation}
where we have shifted the index $r\rightarrow r+l$ and used the following identity for binomial coefficients
\begin{align}
    \binom{a_1-l}{a_2-l}=\frac{[a_2]_l}{[a_1]_l}\binom{a_1}{a_2} \,.
\end{align}
\end{proof}

\newpage
\bibliographystyle{JHEP}
\bibliography{ref}

\end{document}